\newtheorem{theorem}{Theorem}
\journal{elsevier}
\begin{document}

\begin{frontmatter}

\title{Backward bifurcation in SIRS malaria model}

\author[1]{Miliyon Tilahun\corref{cor1}}
\ead{miliyon@ymail.com}

\begin{abstract}
We present a deterministic mathematical model for malaria transmission with waning immunity. The model consists of five non-linear system of differential equations. We used next generation matrix to derive the basic reproduction number $R_0$. The disease free equilibrium was computed and its local stability has been shown by the virtue of the Jacobean matrix. Moreover, using Lyapunov function theory and LaSalle Invariance Principle we have proved that the disease free equilibrium is globally asymptotically stable. Conditions for existence of endemic equilibrium point have been established. A qualitative study based on bifurcation theory reveals that backward bifurcation occur in the model. The stable disease free equilibrium of the model coexists with the stable endemic equilibrium when $R_0<1$. Furthermore, we have shown that bringing the number of disease (malaria) induced death rate below some threshold is sufficient enough to eliminate backward bifurcation in the model.
\end{abstract}

\begin{keyword}
malaria\sep waning immunity \sep basic reproduction number \sep LaSalle Invariance Principle \sep
backward bifurcation \sep
\end{keyword}

\end{frontmatter}

\linenumbers

\section{Introduction}

Malaria is an infectious disease caused by the parasitic infections of red blood cells by a protozoan of the genus \textit{Plasmodium} that are transmitted to people through the bites of infected female \textit{Anopheles} mosquitoes \cite{who}.
Malaria has for many years been considered as a global issue, and many epidemiologists and other scientists invest their effort in learning the dynamics of malaria and to control its transmission. From interactions with those scientists, mathematicians have developed a significant and effective tool, namely mathematical models of malaria, giving an insight into the interaction between the host and vector population, the dynamics of malaria, how to control malaria transmission, and eventually how to eradicate it. Mathematical modelling of malaria has flourished since the days of Ronald Ross \cite{Ross11}, who received the \textit{Nobel Prize} in Physiology or Medicine in $1902$ for his work on the life cycle of the malaria parasite. Ross developed a simple SIS-model.

\medskip

From the Ross's model, several models have been developed by researchers who extended his model by considering different factors such as latent period of infection in mosquitoes and humans \cite{12AronMay,Macdonald57}, age-related differential susceptibility to malaria in human population \cite{AronMay,12AronMay,Dietz}, acquired immunity \cite{AronMay,AronJL,Filipe}, and genetic heterogeneity of host and parasite \cite{Gupta1,Gupta2,Hasi,Rodrig,Torre}.

\medskip

In 2010, Yang et al. \cite{Yang} proposed SIR for the human and SI for the vector compartment model. But in their model, they assumed that the number of births for human and mosquito are independent of the total human and mosquito population. This assumption was later modified by Abadi and Krogstad \cite{Gebre} by making the number of births for human and mosquito dependent of the total human and mosquito population. However, Abadi and Krogstad made an assumption that, once the humans enter the recovered class they never go to the susceptible class again. Yet malaria does not confer permanent immunity \cite{Maia}. The recovered humans have a chance to be susceptible again.

\section{Model Formulation}

In this section, we formulate a mathematical model of malaria transmission with waning immunity. Because humans might repeatedly infected due to not acquiring permanent immunity so the human population is assumed to be described by the SIRS(Susceptible-Infected-Recovered-Susceptible) model. Mosquitoes are assumed not to recover from the parasites due to their short lifespan  so the mosquito population is described by the SI model. Recovered human hosts have temporary immunity that can be lost and are again susceptible to reinfection. All newborns are susceptible to infection, and the development of malaria starts when the infectious female mosquito bites the human host. The vectors do not die from the infection or are otherwise harmed. The flowchart of the model is shown in Figure \ref{fig:model}.

\begin{figure}[htb!]
\centering
\includegraphics[width=0.6\textwidth]{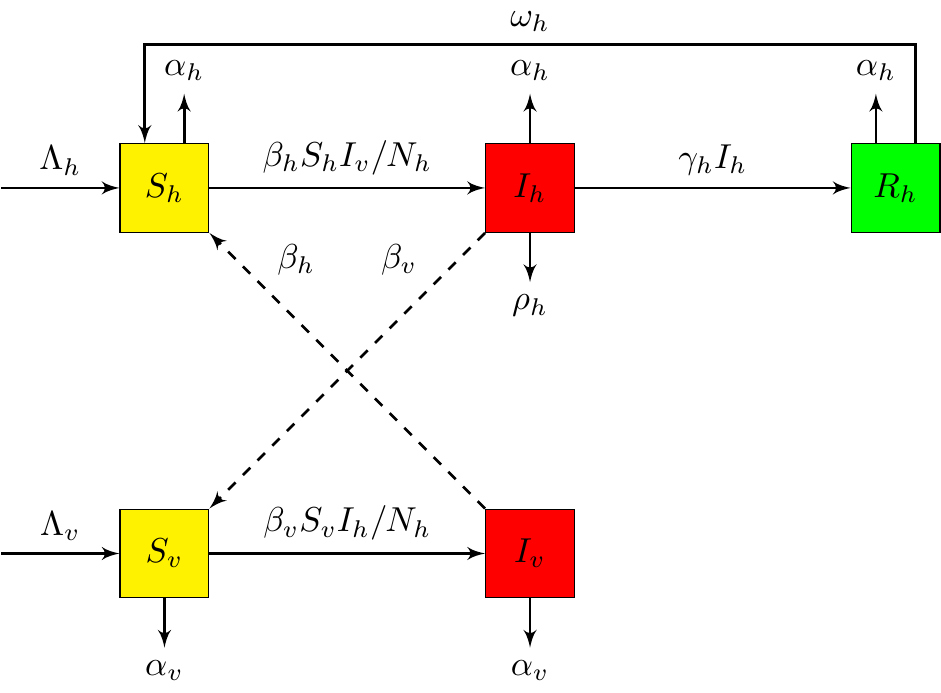}
\caption{Malaria model flowchart}\label{fig:model}
\end{figure}

The flowchart leads to the following system of non-linear ordinary differential equations
\begin{equation}\label{model1}
\begin{aligned}
  \frac{dS_h}{dt} &=\Lambda_h - \frac{\beta_h S_hI_v}{N_h}+\omega_h R_h-\alpha_h S_h, \\
  \frac{dI_h}{dt} &= \frac{\beta_h S_hI_v}{N_h}-\gamma_h I_h -\rho_h I_h-\alpha_h I_h,  \\
  \frac{dR_h}{dt} &=  \gamma_h I_h-\omega_h R_h-\alpha_h R_h, \\
  \frac{dS_v}{dt} &=  \Lambda_v-\frac{\beta_v S_vI_h}{N_h}-\alpha_v S_v, \\
  \frac{dI_v}{dt} &=  \frac{\beta_v S_vI_h}{N_h}-\alpha_v I_v,
\end{aligned}
\end{equation}

subjected to the initial conditions
\[
S_h(0)=S_{h0},\ I_h(0)=I_{h0},\ R_h(0)=R_{h0},\ S_v(0)=S_{v0},\ I_v(0)=I_{v0},
\]
where $N_h$ and $N_v$ represent the total size of the human population and mosquitoes population respectively. All the parameters can be found in Table \ref{table:2}.

\begin{table}[htb!]
\centering
\caption{State variables of malaria model}
\label{table:1}
\begin{tabular}{c l}
 \hline
 Symbol & Description   \\
 \hline
 $S_h$ & The number of susceptible human population  \\
 $I_h$ & The number of infected human population  \\
 $R_h$ & The number of recovered human population \\
 $S_v$ & The number of susceptible mosquitoes population \\
 $I_v$ & The number of infected mosquitoes population \\
 \hline
\end{tabular}
\end{table}
\begin{table}[htb!]
\centering
\caption{Parameters of malaria model}\label{table:2}
\begin{tabular}{ c l }
 \hline
 Symbol & Description   \\
 \hline
 $\Lambda_h$ & The recruitment rate of humans \\
 $\alpha_h$ & The natural death rate of humans  \\
 $\beta_h$ & The human contact rate \\
 $\gamma_h$ & Per capita recovery rate for humans \\
 $\rho_h$ & Per capita disease-induced death rate for humans\\
 $\omega_h$ & The per capita rate of loss of immunity in human \\
 $\Lambda_v$ & The recruitment rate of mosquitoes  \\
 $\alpha_v$ & The natural death rate of mosquitoes  \\
 $\beta_v$ & The mosquito contact rate \\
 \hline
\end{tabular}
\end{table}

In the model, the term $ \frac{\beta_h S_hI_v}{N_h}$ denotes the rate at which the human hosts $S_h$ get infected by infected mosquitoes $I_v$ and $\frac{\beta_v S_vI_h}{N_h}$ refers to the rate at which the susceptible mosquitoes $S_v$ are infected by the infected human hosts $I_h$.

\medskip

The total population sizes $N_h$ and $N_v$ can be determined by $N_h=S_h +I_h +R_h$ and $N_v=S_v +I_v $ or from
the differential equations
\begin{align}
\frac{dN_h}{dt}&=\Lambda_h-\alpha_hN_h-\rho_h I_h,\label{totalh}\\
\frac{dN_v}{dt}&=\Lambda_v-\alpha_vN_v,\label{totalv}
\end{align}
which are derived by adding the first three equations of the system (\ref{model1}) for the human population and the
last two equations of the system (\ref{model1}) for mosquito vector population.

\subsection{Invariant Region and Positivity of Solutions}

The model represented by the system (\ref{model1}) will be analyzed in the feasible region and since we are working with
population all state variables and parameters are assumed to be positive. The invariant region can be obtained by the
following theorem.
\begin{theorem}[Invariant Region]
The solutions of the system (\ref{model1}) are feasible for all $t>0$ if they enter the invariant region
$\Omega =\Omega_h\times \Omega_v$, where
\begin{align*}
\Omega_h=\left\{(S_h,I_h,R_h)\in\Bbb R_+^3: S_h+I_h+R_h \leq\frac{\Lambda_h}{\alpha_h} \right\},
\end{align*}
\begin{align*}
\Omega_v=\left\{(S_v,I_v)\in\Bbb R_+^2: S_v+I_v \leq\frac{\Lambda_v}{\alpha_v} \right\}.
\end{align*}
\end{theorem}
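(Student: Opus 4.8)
The plan is to treat the human and vector total populations separately, working directly from the aggregate equations \eqref{totalh} and \eqref{totalv} already derived above. Since we are operating on populations and all state variables are taken to be non-negative, in particular $I_h\geq 0$, the disease-induced loss term $\rho_h I_h$ in \eqref{totalh} is non-negative and the human total-population balance reduces to the differential inequality
\begin{equation*}
\frac{dN_h}{dt}\leq \Lambda_h-\alpha_h N_h.
\end{equation*}
By contrast, the vector equation \eqref{totalv} contains no infection-induced mortality and is an exact linear first-order ODE, so the two compartments can be handled with the same integrating-factor technique but at different levels of rigor.

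First I would dispose of the vector compartment, which is the easy half. Multiplying \eqref{totalv} by $e^{\alpha_v t}$ and integrating gives the closed form
\begin{equation*}
N_v(t)=\frac{\Lambda_v}{\alpha_v}+\left(N_v(0)-\frac{\Lambda_v}{\alpha_v}\right)e^{-\alpha_v t}.
\end{equation*}
From this expression two conclusions follow at once: if $N_v(0)\leq \Lambda_v/\alpha_v$ then $N_v(t)\leq \Lambda_v/\alpha_v$ for every $t>0$, and in all cases $N_v(t)\to \Lambda_v/\alpha_v$ as $t\to\infty$. Hence $\Omega_v$ is positively invariant and globally attracting for the mosquito subsystem.

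For the human compartment I would apply a standard comparison (Gronwall-type) argument to the inequality above: letting $u(t)$ solve the associated equality $u'=\Lambda_h-\alpha_h u$ with $u(0)=N_h(0)$, the comparison principle gives $N_h(t)\leq u(t)$, and the explicit solution of the comparison equation yields
\begin{equation*}
N_h(t)\leq \frac{\Lambda_h}{\alpha_h}+\left(N_h(0)-\frac{\Lambda_h}{\alpha_h}\right)e^{-\alpha_h t}.
\end{equation*}
Consequently, whenever $N_h(0)\leq \Lambda_h/\alpha_h$ the bound $N_h(t)\leq \Lambda_h/\alpha_h$ persists for all $t>0$, and otherwise the trajectory is drawn down into $\Omega_h$ as $t\to\infty$. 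Combining the invariance of the two factors establishes that $\Omega=\Omega_h\times\Omega_v$ is positively invariant, so solutions that enter $\Omega$ remain feasible for all $t>0$.

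I expect the only genuine subtlety to lie in the human compartment, namely the passage from the equality \eqref{totalh} to the differential inequality and the justification that a bound on the comparison solution transfers to $N_h$ itself; this is precisely the step where the non-negativity of $I_h$ is indispensable, which is why the positivity statement accompanying this theorem is needed to close the argument. The mosquito case is entirely routine because its total-population dynamics decouple from the infection terms, and the attraction statements on both factors are immediate from the exponential decay of the transient terms.
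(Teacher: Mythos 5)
Your proposal is correct and is exactly the standard argument this theorem calls for: the paper itself states the Invariant Region theorem without an explicit proof, but the total-population equations \eqref{totalh} and \eqref{totalv} it derives immediately beforehand are precisely the starting point of your argument, and your use of the differential inequality $\frac{dN_h}{dt}\leq \Lambda_h-\alpha_h N_h$ with a comparison solution mirrors the technique the paper later uses in the global-stability proof (where it bounds $N_h\geq \Lambda_h/(\alpha_h+\rho_h)$ the same way). You also correctly identify the one genuine dependency, namely that the non-negativity of $I_h$ (and of the other components, so that the trajectory stays in $\Bbb R_+^3\times\Bbb R_+^2$) must be supplied by the accompanying Positivity theorem.
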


Moreover, for the system (\ref{model1}) with a non-negative initial data to be epidemiologically meaningful and consistent, we need to show that all the state variables must remain non-negative $\forall t\geq 0$.

\begin{theorem}[Positivity]
Let the initial conditions of system (\ref{model1}) be positive. Then the solutions
$(S_h(t),I_h(t),R_h(t),S_v(t),I_v(t))$ of the system is non-negative for all  $t\geq 0$.
\end{theorem}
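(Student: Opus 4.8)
The plan is to prove positivity of each state variable by exploiting the structure of the governing equations: each right-hand side of system \eqref{model1} has the property that the negative terms are all proportional to the state variable being differentiated. This means every equation can be bounded below by a linear homogeneous differential inequality, allowing me to apply an integrating-factor argument.

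First I would treat the infected human compartment $I_h$. From the second equation of \eqref{model1}, since $\beta_h S_h I_v / N_h \geq 0$, I have $\frac{dI_h}{dt} \geq -(\gamma_h + \rho_h + \alpha_h) I_h$. Separating variables and integrating from $0$ to $t$ gives
\begin{align*}
I_h(t) \geq I_h(0)\,e^{-(\gamma_h+\rho_h+\alpha_h)t} \geq 0,
\end{align*}
since $I_{h0}>0$. The same technique applies to $R_h$, where $\frac{dR_h}{dt} \geq -(\omega_h+\alpha_h)R_h$ because $\gamma_h I_h \geq 0$ (now known), yielding $R_h(t)\geq R_h(0)\,e^{-(\omega_h+\alpha_h)t}\geq 0$. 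Likewise for $I_v$ I would use $\frac{dI_v}{dt}\geq -\alpha_v I_v$, and for the susceptible compartments $S_h$ and $S_v$ the analogous lower bounds $\frac{dS_h}{dt}\geq -\left(\frac{\beta_h I_v}{N_h}+\alpha_h\right)S_h$ and $\frac{dS_v}{dt}\geq -\left(\frac{\beta_v I_h}{N_h}+\alpha_v\right)S_v$ hold, each giving an exponential lower bound through an integrating factor $\exp\!\left(-\int_0^t(\cdots)\,ds\right)$, which is strictly positive.

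The natural order in which to carry out the steps matters: I would establish positivity of $I_h$ and $I_v$ first, since these feed into the lower bounds for the remaining variables. Once $I_v\geq 0$ is known, the coefficient multiplying $S_h$ in its differential inequality is well-defined and non-negative, and once $I_h\geq 0$ is known the same holds for $S_v$; the recovered class $R_h$ relies on $I_h\geq 0$ as well. Thus a clean sequencing — infected compartments, then susceptible and recovered compartments — avoids any circularity.

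The main obstacle I anticipate is handling the coefficient $\frac{\beta_h I_v}{N_h}$ (and its vector analogue) rigorously inside the integrating factor, since $N_h$ is itself a time-dependent quantity and one must ensure $N_h$ stays positive and bounded away from zero so that the integral $\int_0^t \frac{\beta_h I_v}{N_h}\,ds$ is well-defined and finite on the interval of existence. I would address this by noting that the integrating-factor exponential is positive regardless of the (finite, non-negative) value of the exponent, so the lower bound $S_h(t)\geq S_h(0)\exp\!\left(-\int_0^t(\cdots)\,ds\right)>0$ follows as long as the solution exists; positivity of $N_h$ on the existence interval is inherited from the positivity of its constituent compartments established in the earlier steps, closing the argument.
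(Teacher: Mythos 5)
The paper states this positivity theorem without supplying a proof, so there is no in-text argument to compare yours against; judged on its own terms, your integrating-factor strategy is the standard and appropriate tool, but the sequencing you propose is circular and, as written, fails. The inequality $\frac{dI_h}{dt}\geq -(\gamma_h+\rho_h+\alpha_h)I_h$ is only valid once you know $\frac{\beta_h S_h I_v}{N_h}\geq 0$, i.e.\ once you already know $S_h\geq 0$ and $I_v\geq 0$ (and $N_h>0$); symmetrically, $\frac{dI_v}{dt}\geq -\alpha_v I_v$ requires $S_v\geq 0$ and $I_h\geq 0$. So neither infected compartment can be handled ``first'': each of the five lower bounds presupposes nonnegativity of other state variables, and your proposed order (infected compartments, then susceptible and recovered) does not break the cycle --- it merely relocates it.

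The standard repair is a first-crossing-time argument rather than a sequential one. Since the initial data are strictly positive and solutions are continuous, the quantity
\begin{equation*}
t_1=\sup\{t> 0: S_h(s)>0,\ I_h(s)>0,\ R_h(s)>0,\ S_v(s)>0,\ I_v(s)>0 \ \text{for all } s\in[0,t]\}
\end{equation*}
is well defined and positive. On $[0,t_1)$ \emph{all} cross terms such as $\frac{\beta_h S_h I_v}{N_h}$ and $\frac{\beta_v S_v I_h}{N_h}$ are nonnegative and $N_h\geq S_h>0$, so every one of your five differential inequalities holds simultaneously there, and each integrating factor yields a strictly positive lower bound of the form $X(t)\geq X(0)\exp\bigl(-\int_0^t(\cdots)\,ds\bigr)>0$ up to and including $t=t_1$ by continuity. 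If $t_1$ were finite (and less than the maximal existence time), some component would vanish at $t_1$, contradicting these strict lower bounds; hence $t_1$ coincides with the end of the existence interval and all components remain nonnegative for all $t\geq 0$. With this modification your argument is complete; your closing observation that $N_h$ stays positive on the relevant interval is exactly what the restriction to $[0,t_1)$ delivers.
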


\section{Model Analysis}

\subsection{The Basic Reproduction Number}

The basic reproduction number($R_0$) is the average number of new infections, that one infected case will generate during their entire infectious lifetime \cite{Kenrad,Addo,heffernan2005}. It is very important in determining whether the disease persist in the population or die out.

\medskip

We use the next generation matrix to compute the basic reproduction number ($R_0$) which is formulated in \cite{Watmough}. Let us assume that there are $n$ compartments of which the first $m$ compartments correspond to infected individuals. Let
\begin{itemize}
\item $\mathscr{F}_i(x)$ be the rate of appearance of new infections in compartment $i$,
\item $\mathscr{V}_i^+(x)$ be the rate of transfer of individuals into compartment $i$ by all other means, and
\item $\mathscr{V}_i^-(x)$ be the rate of transfer of individuals out of compartment $i$.
\end{itemize}
It is assumed that each function is continuously differentiable at least twice in each variable. The disease transmission model consists of nonnegative initial conditions together with the following system of equations:
\begin{align}\label{R01}
\frac{dx_i}{dt}=f_i(x)=\mathscr{F}_i(x)-\mathscr{V}_i(x),\quad i=1,\ldots,n,
\end{align}
where $\mathscr{V}_i(x)=\mathscr{V}_i^-(x)-\mathscr{V}_i^+(x)$.

\medskip

The next step is the computation of the $m\times m$ square matrices $F$ and $V$ which are defined by
\[F=
\begin{bmatrix}
\frac{\partial\mathscr{F}_i }{\partial x_j}(x_0)
\end{bmatrix}
\mbox{ and }
V=
\begin{bmatrix}
\frac{\partial\mathscr{V}_i }{\partial x_j}(x_0)
\end{bmatrix}
\mbox{ with } 1\leq i,j\leq m.
\]
such that $F$ is non-negative, $V$ is a non-singular $M$-matrix and $x_0$ is the disease free equilibrium point (DFE) of (\ref{R01}). Since $F$ is non-negative and $V$ is non-singular, then $V^{-1}$ is nonnegative and also of $FV^{-1}$ is non-negative. The matrix $FV^{-1}$ is called the next generation matrix \cite{Watmough}. Finally, the basic reproduction number is given by
\[
R_0=\rho(FV^{-1}),
\]
where $\rho(A)$ denotes the spectral radius of a matrix $A$ and the spectral radius, $\rho(FV^{-1})$, is the largest absolute value of eigenvalues of the next generation matrix.

Thus for the model (\ref{model1}), $\mathscr{F}$ and $\mathscr{V}$ are given by
\begin{align}\label{R0f}
\mathscr{F}=
\begin{pmatrix}
\frac{\beta_h S_hI_v}{N_h}\\
\frac{\beta_v S_vI_h}{N_h}
\end{pmatrix},
\end{align}
and
\begin{align}\label{R0v}
\mathscr{V}=
\begin{pmatrix}
(\gamma_h +\rho_h +\alpha_h )I_h\\
\alpha_v I_v
\end{pmatrix}.
\end{align}
The disease free equilibrium (DFE) of the model is computed by setting the right hand side of the model (\ref{model1}) equal to zero with $I_h=0$ and $I_v=0$. Then the Jacobian matrix of (\ref{R0f}) at the disease free equilibrium (DFE) $E_0=\left(\frac{\Lambda_h}{\alpha_h},0,0,\frac{\Lambda_v}{\alpha_v},0\right)$ is given by
\[
F=
\begin{pmatrix}
0 & \beta_h\\
\frac{\beta_v \Lambda_v/\alpha_v}{\Lambda_h/\alpha_h} &0
\end{pmatrix}.
\]
Similarly, the Jacobian matrix of (\ref{R0v}) at the DFE $E_0$ is
\[
V=
\begin{pmatrix}
\gamma_h +\rho_h +\alpha_h & 0\\
0 & \alpha_v
\end{pmatrix}.
\]
Therefore, our next generation matrix is
\[
FV^{-1}=
\begin{pmatrix}
0 & \frac{\beta_h}{\alpha_v}\\
\frac{\beta_v\Lambda_v\alpha_h}{\alpha_v\Lambda_h(\gamma_h +\rho_h +\alpha_h)} & 0
\end{pmatrix}.
\]
Then the reproduction number $R_0=\rho(FV^{-1})$ is given by
\[
R_0=\sqrt{\frac{\beta_h\beta_v\Lambda_v\alpha_h}{\Lambda_h\alpha_v^2(\gamma_h +\rho_h +\alpha_h)}}.
\]
\subsection{Stability of Disease Free Equilibrium Point}

 The equilibria are obtained by equating the right hand side of the system (\ref{model1}) to zero. Disease-free equilibrium (DFE) of the model is the steady-state solution of the model in the absence of the disease (malaria). Hence, the DFE of the malaria model (\ref{model1}) is given by

\[
E_0=(S_h^0,I_h^0,R_h^0,S_v^0,I_v^0)=\left(\frac{\Lambda_h}{\alpha_h},0,0,\frac{\Lambda_v}{\alpha_v},0\right).
\]

\begin{theorem}
The DFE, $E_0$, of the system (\ref{model1}) is \textit{locally asymptotically stable} if $R_0<1$.
\end{theorem}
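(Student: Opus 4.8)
The plan is to linearize the full five-dimensional system (\ref{model1}) about $E_0$ and show that every eigenvalue of the Jacobian $J(E_0)$ has negative real part when $R_0<1$. First I would compute the Jacobian entry by entry. The crucial simplification is that at $E_0$ we have $I_h=I_v=0$ and $S_h=N_h=\Lambda_h/\alpha_h$, so every term arising from differentiating the mass-action quotients $\beta_h S_hI_v/N_h$ and $\beta_v S_vI_h/N_h$ that carries a factor $I_h$ or $I_v$ vanishes. This kills all the quotient-rule contributions coming from $\partial N_h/\partial(\cdot)$ and leaves a sparse matrix.

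Next I would exploit the block structure of $J(E_0)$. Reordering the variables as $(I_h,I_v,S_h,R_h,S_v)$ reveals that the infected pair $(I_h,I_v)$ forms a closed $2\times2$ block, because neither $dI_h/dt$ nor $dI_v/dt$ depends, to first order at $E_0$, on $S_h$, $R_h$, or $S_v$. This infected block is exactly $F-V$ with $F$ and $V$ as computed above, namely
\[
J_{\mathrm{inf}}=\begin{pmatrix} -(\gamma_h+\rho_h+\alpha_h) & \beta_h \\[2pt] \dfrac{\beta_v\Lambda_v\alpha_h}{\alpha_v\Lambda_h} & -\alpha_v \end{pmatrix}.
\]
The complementary block governing $(S_h,R_h,S_v)$ is triangular with diagonal entries $-\alpha_h$, $-(\omega_h+\alpha_h)$, and $-\alpha_v$, so it contributes three manifestly negative eigenvalues. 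Consequently the stability of $E_0$ is decided entirely by the two eigenvalues of $J_{\mathrm{inf}}$.

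Finally I would apply the Routh--Hurwitz criterion to the $2\times2$ matrix $J_{\mathrm{inf}}$. Its trace $-(\gamma_h+\rho_h+\alpha_h)-\alpha_v$ is negative unconditionally, so it suffices to examine the determinant. A short computation, substituting the expression for $R_0$, gives
\[
\det J_{\mathrm{inf}}=\alpha_v(\gamma_h+\rho_h+\alpha_h)\bigl(1-R_0^{2}\bigr),
\]
which is positive precisely when $R_0<1$. Hence when $R_0<1$ both eigenvalues of $J_{\mathrm{inf}}$ have negative real part, all five eigenvalues of $J(E_0)$ lie in the open left half-plane, and $E_0$ is locally asymptotically stable. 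I expect the only real bookkeeping hazard to be the careful evaluation of the Jacobian entries involving $N_h$; once the vanishing of the $I_h,I_v$-weighted terms at $E_0$ is noted, the remainder reduces to the routine trace--determinant check above.
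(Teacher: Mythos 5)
Your proposal is correct and follows essentially the same route as the paper: the paper computes $J(E_0)$, factors its characteristic polynomial as $(\lambda+\alpha_h)(\lambda+\alpha_v)(\lambda+\alpha_h+\omega_h)[\lambda^2+(k+\alpha_v)\lambda+\alpha_v k(1-R_0^2)]$ with $k=\gamma_h+\rho_h+\alpha_h$, which is precisely your block decomposition, with the quadratic factor being the characteristic polynomial of your $J_{\mathrm{inf}}$ (trace $-(k+\alpha_v)$, determinant $\alpha_v k(1-R_0^2)$). The only cosmetic difference is that the paper writes out the two roots of the quadratic explicitly rather than invoking the trace--determinant (Routh--Hurwitz) criterion.
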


\begin{proof}
The Jacobian matrix of (\ref{model1}) at the disease-free equilibrium $E_0$ is
\begin{align}\label{jatdfe}
J(E_0)=
\begin{pmatrix}
-\alpha_h &  0                                                            & \omega_h             & 0         & -\beta_h  \\
0         & -(\gamma_h+\rho_h+\alpha_h)                                   & 0                    & 0         & \beta_h  \\
0         & \gamma_h                                                      & -(\alpha_h+\omega_h) & 0         &  0   \\
0         & -\frac{\beta_{v}\Lambda_{v}\alpha_{h}}{\alpha_{v}\Lambda_{h}} & 0                    & -\alpha_v & 0\\
0         & \frac{\beta_{v}\Lambda_{v}\alpha_{h}}{\alpha_{v}\Lambda_{h}}  & 0                    &  0        & -\alpha_v
\end{pmatrix}.
\end{align}
The characteristic equation of (\ref{jatdfe}) is given by
\[
(\lambda+\alpha_h)(\lambda+\alpha_v)(\lambda+\alpha_h+\omega_h)[\lambda^2+(k+\alpha_v)\lambda+\alpha_vk(1-R_0^2)]=0
\]
where $k=\gamma_h+\rho_h+\alpha_h$. The eigenvalues of (\ref{jatdfe}) are
\begin{align*}
\lambda_1&=-\alpha_h<0,\\ \lambda_2&=-\alpha_v<0,\\ \lambda_3&=-(\alpha_h+\omega)<0\\
\lambda_4&=\frac{-(k+\alpha_v)-\sqrt{(k+\alpha_v)^2-4\alpha_vk(1-R_0^2)}}{2}<0\\
\lambda_5&=\frac{-(k+\alpha_v)+\sqrt{(k+\alpha_v)^2-4\alpha_vk(1-R_0^2)}}{2}<0\mbox{ for }R_0<1
\end{align*}
Hence the result follows.
\end{proof}

\begin{theorem}
If $R_0^2<\Delta$, the disease free equilibrium ($E_0$) of the system (\ref{model1}) is \textit{globally asymptotically stable}, where
\begin{align*}
\Delta = \frac{\beta_h\beta_v\rho_h\Lambda_v}{\alpha_v^2\Lambda_h(\gamma_h+\rho_h+\alpha_h)}.
\end{align*}
\end{theorem}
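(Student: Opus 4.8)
The plan is to build a Lyapunov function on the compact, positively invariant region $\Omega$ supplied by the Invariant Region theorem and then invoke the LaSalle Invariance Principle. Since the disease lives in the two infected compartments $I_h,I_v$, I would take a linear combination of them as the candidate,
\[
L=\alpha_v I_h+\beta_h I_v,
\]
which is nonnegative on $\Omega$ and vanishes exactly on the disease-free face $\{I_h=I_v=0\}$. The weights are chosen (guided by the next-generation matrix $V$ and the off-diagonal entries of $F$) so that the cross terms produced by the transmission nonlinearities cancel as cleanly as possible.

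Differentiating $L$ along (\ref{model1}) and substituting $\dot I_h,\dot I_v$ gives, after grouping,
\[
\dot L=\alpha_v\beta_h I_v\!\left(\frac{S_h}{N_h}-1\right)-\alpha_v k\,I_h+\frac{\beta_h\beta_v S_v}{N_h}\,I_h,\qquad k=\gamma_h+\rho_h+\alpha_h .
\]
On $\Omega$ we have $S_h\le N_h$, so the first term is $\le 0$ and drops out, leaving the sign of $\dot L$ controlled by the single coefficient $\frac{\beta_h\beta_v S_v}{N_h}-\alpha_v k$ multiplying $I_h$. The decisive step is therefore to bound $S_v/N_h$ from above: $S_v\le\Lambda_v/\alpha_v$ is immediate from $\Omega_v$, but $N_h$ sits in the denominator and must be bounded below. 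Here I would use the total-population equation (\ref{totalh}): since $I_h\le N_h$, one gets $\dot N_h\ge\Lambda_h-(\alpha_h+\rho_h)N_h$, so $N_h$ is eventually bounded below by a constant proportional to $\Lambda_h/(\alpha_h+\rho_h)$. It is precisely this lower bound---in which the disease-induced death rate $\rho_h$ appears---that injects $\rho_h$ into the estimate and produces the threshold quantity $\Delta$. Feeding the two bounds back in collapses the $I_h$-coefficient to an explicit algebraic expression in the parameters, and the requirement that it be nonpositive is the threshold inequality recorded in the statement, with $\Delta$ emerging from the $\rho_h$ term; under that hypothesis $\dot L\le 0$ on $\Omega$.

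With $\dot L\le 0$ in hand I would finish by LaSalle. The set $\{\dot L=0\}$ forces $I_h=0$; requiring the set to be invariant then forces $I_v=0$ as well (otherwise $\dot I_h>0$ and trajectories leave it), after which the remaining equations drive $R_h\to 0$, $S_h\to\Lambda_h/\alpha_h$ and $S_v\to\Lambda_v/\alpha_v$, so the largest invariant subset is the singleton $\{E_0\}$ and LaSalle's principle yields global asymptotic stability of $E_0$ on $\Omega$. The main obstacle I anticipate is exactly the lower bound on $N_h$ and the attendant control of $S_v/N_h$: because disease-induced mortality lets $N_h$ fall below its disease-free value $\Lambda_h/\alpha_h$, the naive comparison $S_v/N_h\le S_v^0/N_h^0$ fails---this is the same mechanism that drives the backward bifurcation analysed later in the paper---so the estimate must pay a $\rho_h$-dependent penalty, which is why a bare $R_0<1$ is replaced by a stronger threshold phrased through $\Delta$.
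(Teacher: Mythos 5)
Your proposal follows the paper's proof essentially step for step: the paper's Lyapunov function is $L=\tfrac{\alpha_v}{\beta_h}I_h+I_v$, a scalar multiple of yours; it uses the same bounds $S_h\le N_h$, $S_v\le\Lambda_v/\alpha_v$, and the same lower bound $N_h\ge\Lambda_h/(\alpha_h+\rho_h)$ obtained from $\dot N_h\ge\Lambda_h-(\alpha_h+\rho_h)N_h$; and it closes with LaSalle in the same way. So in terms of approach there is nothing to distinguish the two.

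There is, however, a genuine gap at the one step you wave through, namely the claim that nonpositivity of the $I_h$-coefficient ``is the threshold inequality recorded in the statement.'' Carrying out the algebra, the coefficient is nonpositive iff
\[
\frac{\beta_h\beta_v\Lambda_v(\alpha_h+\rho_h)}{\alpha_v^2\Lambda_h(\gamma_h+\rho_h+\alpha_h)}\le 1,
\qquad\text{i.e.}\qquad R_0^2\,\frac{\alpha_h+\rho_h}{\alpha_h}\le 1,
\]
and since $\Delta=(\rho_h/\alpha_h)R_0^2$ this reads $R_0^2+\Delta\le 1$, equivalently $R_0^2\le 1-\Delta$ --- a condition of the form ``$R_0$ below a threshold smaller than one,'' not the stated $R_0^2<\Delta$. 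Your estimate therefore proves global stability under a different (and more sensible) hypothesis than the one in the theorem, and no amount of rearranging the same bounds will produce $R_0^2<\Delta$. You cannot repair this by appealing to the paper either: its bridge is the asserted identity $R_1=R_0^2+1-\Delta$, but direct computation gives $R_1=1-(\alpha_h+\rho_h)\Delta$, so the identity is false and the paper's own proof establishes $R_1<1$ (which holds vacuously) rather than the stated condition. If you complete your argument honestly, state and prove the theorem with the hypothesis $R_0^2+\Delta\le 1$; as written, the final identification is the missing (and unfixable) step.
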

\begin{proof}
To establish the global stability of the DFE ($E_0$), let
\begin{align*}
R_1 = 1-\frac{\beta_h\beta_v\rho_h\frac{\Lambda_v}{\alpha_v}}{\alpha_v\frac{\Lambda_h}{\alpha_h+\rho_h}(\gamma_h+\rho_h+\alpha_h)}.
\end{align*}
We notice that $R_1=R_0^2+1-\Delta$, thus $R_1<1$ is equivalent to $R_0^2<\Delta$. Since
\[
\frac{dN_h}{dt}=\Lambda_h-\alpha_h N_h-\rho_hI_h\geq \Lambda_h-(\alpha_h+\rho_h) N_h,
\]
we have
\[N_h\geq \frac{\Lambda_h}{\alpha_h+\rho_h}.\]
Now we choose the following Lyapunov function
\[
L(I_h,I_v)=\frac{\alpha_v}{\beta_h}I_h+I_v.
\]
Differentiating the Lyapunov function with respect to $t$, we obtain
\begin{align*}
\frac{dL}{dt} &= \frac{\alpha_v}{\beta_h}\frac{dI_h}{dt}+\frac{dI_v}{dt}\\
    &=\frac{\alpha_v}{\beta_h}\left(\frac{\beta_h S_hI_v}{N_h}-(\gamma_h+\rho_h+\alpha_h)I_h\right)+\frac{\beta_v S_vI_h}{N_h}-\alpha_v I_v\\
   &=\alpha_v\frac{S_hI_v}{N_h}-\frac{\alpha_v}{\beta_h}(\gamma_h+\rho_h+\alpha_h)I_h+\frac{\beta_v S_vI_h}{N_h}-\alpha_v I_v\\
   &=\left(\beta_v\frac{S_v}{N_h}-\frac{\alpha_v}{\beta_h}(\gamma_h+\rho_h+\alpha_h)\right)I_h-\alpha_v\left(1-\frac{S_h}{N_h}
   \right)I_v\\
   &\leq \left(\beta_v\frac{\Lambda_v/\alpha_v}{\Lambda_h/(\alpha_h+\rho_h)}-\frac{\alpha_v}{\beta_h}(\gamma_h+\rho_h+\alpha_h)\right)I_h\\
   &= \frac{\beta_h}{\alpha_v(\gamma_h+\rho_h+\alpha_h)}\left(R_1^2-1\right)I_h.
\end{align*}
Thus we have established that $\frac{dL}{dt}<0$ if $R_1<1$ and $\frac{dL}{dt}=0$ if and only if $I_h=0$, $I_v=0$. Therefore, the largest compact invariant set in
\[\left\{(S_h,I_h,R_h,S_v,I_v)\in \Omega:\frac{dL}{dt}=0\right\},\]
is the singleton set $\{E_0\}$ in $\Omega$. From LaSalle's invariant principle \cite{Khalil}, every solution that starts in the region $\Omega$ approaches $E_0$ as $t\to \infty$ and hence, the DFE $E_0$ is globally asymptotically stable for $R_0^2<\Delta$ in $\Omega$.
\end{proof}

\subsection{Existence and Stability of Endemic Equilibrium Point}

Endemic equilibrium point (EEP) is a steady state solution where the disease persists in the population. The EEP of (\ref{model1}) is given by $E_*=(S_h^*,I_h^*,R_h^*S_v^*,I_v^*)$ where
\begin{align*}
S_h^*=\frac{\alpha_v\Lambda_h+\alpha_h\beta_vI_h^*}{\alpha_h\alpha_vR_0^2},\ &
R_h^*=\frac{\gamma_h}{\alpha_h+\omega_h}I_h^*,\
S_v^*=\frac{\Lambda_v N_h}{\alpha_vN_h+\beta_vI_h^*},\\
I_v^*&=\frac{\beta_v\Lambda_vI_h^*}{\alpha_v(\alpha_vN_h+\beta_vI_h^*)},
\end{align*}
and $I_h^*$ is obtained by solving the equation
\begin{align}\label{polyendemic}
A(I_h^*)^2+BI_h^*+C=0,
\end{align}
where
\begin{equation}\label{polycoefficeint}
\begin{aligned}
A&=\alpha_h\alpha_v\beta_h\gamma_h\omega_hR_0^2N_h-\alpha_h\alpha_v\beta_h\beta_vR_{0v}
-\alpha_h^2\alpha_v\beta_v(\alpha_h+\omega_h)N_h^2,\\
B&=\alpha_h\alpha_v\beta_h\Lambda_h(\alpha_h+\omega_h)R_0^2N_h+\alpha_h\alpha_v^2\gamma_h\omega_hR_0^2N_h^2\\
&-\alpha_h\alpha_v\beta_h\Lambda_h(\alpha_h+\omega_h)N_h -\alpha_h^2\alpha_v\beta_v(\alpha_h+\omega_h)N_h^2
-\alpha_v^2\beta_h\Lambda_hR_{0v},\\
C&=\alpha_h\alpha_v^2\Lambda_h(\alpha_h+\omega_h)N_h^2\left(1-R_0^2\right).
\end{aligned}
\end{equation}

From (\ref{polycoefficeint}) it follows that $C>0$ whenever $R_0>1$. Thus, the number of possible positive real roots for (\ref{polyendemic}) depends on the signs of $A$ and $B$. This can be analyzed using the Descartes' Rule of Signs on the quadratic function
\begin{align*}
f(I_h^*)=A(I_h^*)^2+BI_h^*+C.
\end{align*}

The different possibilities for the roots $f(I_h^*)$ are tabulated in Table \ref{table:3}.

\begin{table}[htb!]
\centering
\caption{Number of possible positive real roots of $f(I_h^*)$ for $R_0<1$ and $R_0>1$.}
\label{table:3}
\begin{tabular}{c c c c c c c}
 \hline
 Cases & A & B & C & $R_0$ & \textnumero\ of sign changes & \textnumero\ of positive real roots   \\
 \hline
1      & + & + & + & $>1$            &          $0$           &  $0$ \\
2      & + & + & - & $<1$            &          $1$           &  $1$ \\
3      & + & - & + & $>1$            &          $2$           &  $2$ \\
4      & + & - & - & $<1$            &          $1$           &  $1$ \\
5      & - & + & + & $>1$            &          $1$           &  $1$ \\
6      & - & + & - & $<1$            &          $2$           &  $2$ \\
7      & - & - & + & $>1$            &          $1$           &  $1$ \\
8      & - & - & - & $<1$            &          $0$           &  $0$ \\
 \hline
\end{tabular}
\end{table}
Hence, we have established the following result
\begin{theorem}
The system (\ref{model1}) has a unique endemic equilibrium point where one of the cases $2,4,5$ and $7$ in Table \ref{table:3} are satisfied.
\end{theorem}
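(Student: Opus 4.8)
The plan is to reduce the existence and uniqueness of an endemic equilibrium to counting the strictly positive roots of the quadratic (\ref{polyendemic}), and then to apply Descartes' Rule of Signs reinforced by a parity argument. First I would observe that an endemic equilibrium is completely determined by the single quantity $I_h^*$: once a positive value of $I_h^*$ is fixed, the remaining coordinates $S_h^*$, $R_h^*$, $S_v^*$ and $I_v^*$ are given explicitly by the formulas preceding (\ref{polyendemic}), and each of these expressions is manifestly positive whenever $I_h^*>0$ and all parameters are positive. Hence the endemic equilibria of (\ref{model1}) are in one-to-one correspondence with the strictly positive roots of $f(I_h^*)=A(I_h^*)^2+BI_h^*+C$, and it suffices to show that in each of the cases $2,4,5,7$ this quadratic has exactly one positive root.

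Next I would invoke Descartes' Rule of Signs applied to $f$. For a real polynomial the number of positive real roots, counted with multiplicity, is equal to the number of sign variations in its ordered coefficient sequence $(A,B,C)$, or is smaller than that number by a positive even integer. Reading off the sign patterns from Table \ref{table:3}, each of the cases $2,4,5,7$ exhibits exactly one sign change. Since the number of positive roots must be non-negative and can differ from the number of sign variations only by an even amount, the only admissible value is $1$; the competing value $1-2=-1$ is excluded. Therefore $f$ has precisely one positive root $I_h^*$ in each of these cases, which by the correspondence above yields a unique endemic equilibrium $E_*$.

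The step that requires the most care is the parity argument, since Descartes' rule by itself furnishes only an upper bound on the number of positive roots; one must combine it with the fact that a deficit from the sign-variation count occurs only in even steps in order to rule out the possibility of zero roots. This is exactly what separates cases $2,4,5,7$ (one sign change, forcing a unique root) from the ambiguous cases $3$ and $6$ (two sign changes, permitting either zero or two roots, and thus the coexistence of equilibria responsible for the backward bifurcation). As a consistency check I would also note that $C>0$ precisely when $R_0<1$ from the expression in (\ref{polycoefficeint}), which matches the sign entries recorded for these four cases and confirms that $A\neq 0$, so that the root count is genuinely governed by the full coefficient triple rather than degenerating to a linear equation.
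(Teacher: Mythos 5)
Your proof follows the paper's own route: reduce endemic equilibria to positive roots of the quadratic $f(I_h^*)=A(I_h^*)^2+BI_h^*+C$ and read the root count off the sign patterns in Table \ref{table:3} via Descartes' Rule of Signs. You are in fact slightly more careful than the paper, which merely tabulates the sign changes and asserts the result: you make explicit both the positivity of the remaining coordinates when $I_h^*>0$ and the even-deficit refinement of Descartes' rule that upgrades ``at most one positive root'' to ``exactly one''; the only blemish is your parenthetical consistency check, since $C>0\iff R_0<1$ is mathematically correct but does not match the $R_0$ column of Table \ref{table:3}, which records that relationship with the inequality reversed.
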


The existence of multiple endemic equilibrium point when $R_0<1$ is shown in Table \ref{table:3} which suggests the possibility of backward bifurcation \cite{Abba}, where the stable DFE coexists with a stable endemic equilibrium, when the
reproduction number is less than unity. Thus, the occurrence of a backward bifurcation has an important implications for epidemiological control measures, since an epidemic may persist at steady state even if $R_0<1$. This will be explored in the next section.

\subsection{Existence of Backward Bifurcation}

We shall use the following theorem in \cite{Song}, to show that the system (\ref{model1}) exhibits backward bifurcation at $R_0=1$.

\medskip

\begin{theorem}[Castillo-Chavez, Song]\label{thm:chavez}
Consider the following general system of ordinary differential equations with a parameter $\phi$
\begin{align}\label{bifur1}
\frac{dx}{dt}=f(x,\phi),\qquad f:\Bbb R^n\times \Bbb R\to\Bbb R  \mbox{ and } f\in \Bbb C^2(\Bbb R^n\times \Bbb R).
\end{align}
Without loss of generality, it is assumed that $0$ is an equilibrium for system (\ref{bifur1}) for all values of the parameter $\phi$, (that is $f(0,\phi)\equiv 0$). Assume
\begin{description}
  \item[(A1)] $A=D_xf(0,0)=\left(\frac{\partial f_i}{\partial x_j}(0,0)\right)$ is the linearized matrix of system (\ref{bifur1}) around the equilibrium $0$ with $\phi$ evaluated at $0$. Zero is a simple eigenvalue of $A$ and all other eigenvalues of $A$ have negative real parts;

  \item[(A2)] Matrix $A$ has a non-negative right eigenvector $w$ and a left eigenvector $v$ corresponding to the zero eigenvalue. Let $f_k$ be the $k^{\mbox{th}}$ component of $f$ and
  \begin{equation}
  \begin{aligned}
  a&= \sum_{k,i,j=1}^{n}v_kw_iw_j\frac{\partial^2f_k}{\partial x_i \partial x_j}(0,0),\\
  b&= \sum_{i,k=1}^{n} v_kw_i \frac{\partial^2f_k}{\partial x_i \partial \phi}(0,0).
  \end{aligned}
  \end{equation}
\end{description}
The local dynamics of system (\ref{bifur1}) around $0$ are totally determined by $a$ and $b$. Particularly, if $a > 0$ and $b > 0$, then a backward bifurcation occurs at $\phi= 0$.
\end{theorem}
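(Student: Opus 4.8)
The plan is to establish the result through center manifold theory, reducing the $n$-dimensional flow near the bifurcation point to a scalar equation whose quadratic normal form is governed exactly by the coefficients $a$ and $b$. First I would exploit assumption (A1): since $0$ is a simple eigenvalue of $A=D_xf(0,0)$ and every remaining eigenvalue has negative real part, the center subspace is one-dimensional and spanned by the right eigenvector $w$, while the complementary directions span an $(n-1)$-dimensional stable subspace. Normalizing the left eigenvector $v$ by $vw=1$, I would introduce the center coordinate $u=vx$ and decompose $x=uw+y$, where $y$ lies in the stable subspace.

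Next I would treat $\phi$ as an additional, trivial state variable by appending $\dot\phi=0$. Because $f(0,\phi)\equiv 0$ forces $D_\phi f(0,0)=0$, the augmented linearization is block-diagonal with eigenvalues those of $A$ together with one extra zero, so the origin of the augmented system carries a two-dimensional center manifold parametrized by $(u,\phi)$, of the form $y=h(u,\phi)$ with $h(0,0)=0$ and $Dh(0,0)=0$; the identity $f(0,\phi)\equiv 0$ further gives $h(0,\phi)\equiv 0$, so $h=O(u^2,u\phi)$. Substituting this graph into the projected equation $\dot u=v\,f(uw+h(u,\phi),\phi)$ and Taylor-expanding $f$ to second order yields the reduced scalar dynamics. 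The term $vAh$ vanishes because $vA=0$, the pure linear-in-$u$ term vanishes at criticality for the same reason, and every contribution of $h$ is pushed to third order; the surviving quadratic terms are exactly
\[
\dot u=\tfrac{1}{2}\,a\,u^{2}+b\,u\,\phi+(\text{higher-order terms}),
\]
with $a=\sum_{k,i,j}v_kw_iw_j\,\partial^2 f_k/\partial x_i\partial x_j(0,0)$ and $b=\sum_{i,k}v_kw_i\,\partial^2 f_k/\partial x_i\partial\phi(0,0)$ coinciding with the constants in (A2).

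Finally I would read the bifurcation off this normal form. The equilibria near the origin satisfy $u(\tfrac{1}{2}a\,u+b\phi)=0$, giving the trivial branch $u=0$ and the bifurcating branch $u^{*}=-2b\phi/a$; linearizing the scalar field at $u=0$ yields the eigenvalue $b\phi$. When $a>0$ and $b>0$ the branch satisfies $u^{*}>0$ precisely for $\phi<0$, while the trivial equilibrium is locally asymptotically stable there (since $b\phi<0$). Hence a positive (endemic) equilibrium exists on the very side of criticality, $\phi<0$, on which the trivial equilibrium remains stable --- the defining feature of a backward bifurcation at $\phi=0$, in contrast to the forward case where the nontrivial branch would appear only for $\phi>0$. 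I expect the main obstacle to be the center manifold step: one must justify rigorously that the graph $h$ contributes only at third order, so that the second-order Taylor coefficients of the reduced field collapse cleanly to the stated sums over $v$, $w$, and the Hessian of $f$, and invoke the nondegeneracy $a\neq 0$ to guarantee that the discarded higher-order terms do not disturb the local bifurcation picture.
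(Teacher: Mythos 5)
The paper does not prove this theorem at all --- it is quoted verbatim (as Theorem~\ref{thm:chavez}) from the cited reference of Castillo-Chavez and Song, so there is no in-paper proof to compare against. Your center-manifold argument is, however, essentially the proof given in that original reference, and it is correct: the reduction of the augmented system to the scalar normal form $\dot u=\tfrac{1}{2}au^{2}+bu\phi+\cdots$ on the two-dimensional center manifold, with the $h$-contributions relegated to third order because $vA=0$ and $h=O(u^{2},u\phi)$, and the identification of the positive (unstable) branch $u^{*}=-2b\phi/a$ on the side $\phi<0$ where the trivial equilibrium is still stable, are exactly the standard steps establishing the backward bifurcation when $a>0$ and $b>0$.
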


To apply the above result, the following simplification and change of variables are made on the system (\ref{model1}). Let $S_h=x_1,I_h=x_2,R_h=x_3,S_v=x_4$ and $I_v=x_5$. So,
$N_h=x_1+x_2+x_3\mbox{ and } N_v=x_1+x_2$.
Moreover, by using vector notation $x=(x_1,x_2,x_3,x_4,x_5)^T$, the system (\ref{model1}) can be written in the form $\frac{dx}{dt}=(f_1,f_2,f_3,f_4,f_5)^T$ as follows
\begin{equation}\label{bifurmodel}
\begin{aligned}
  \frac{dx_1}{dt}=f_1 &=\Lambda_h - \frac{\beta_h x_1x_5}{x_1+x_2+x_3}+\omega_h x_3-\alpha_h x_1, \\
  \frac{dx_2}{dt}=f_2 &= \frac{\beta_h x_1x_5}{x_1+x_2+x_3}-(\gamma_h +\rho_h +\alpha_h) x_2, \\
  \frac{dx_3}{dt}=f_3 &=  \gamma_h x_2-\omega_h x_3-\alpha_h x_3, \\
  \frac{dx_4}{dt}=f_4 &=  \Lambda_v-\frac{\beta_v x_4x_2}{x_1+x_2+x_3}-\alpha_v x_4, \\
  \frac{dx_5}{dt}=f_5 &=  \frac{\beta_v x_4x_2}{x_1+x_2+x_3}-\alpha_v x_5, \\
\end{aligned}
\end{equation}
Choose $\beta_h=\beta_h^*$ as a bifurcation parameter. Solving for $\beta_h^*$ from $R_0=1$ gives
\begin{align*}
\beta_h^*=\frac{\Lambda_h\alpha_v^2(\gamma_h +\rho_h +\alpha_h)}{\beta_v\Lambda_v\alpha_h},
\end{align*}
The Jacobian matrix of the system (\ref{bifurmodel}) evaluated at the disease free equilibrium $E_0$ with $\beta_h=\beta_h^*$ is given by
\begin{align*}
J_*=
\begin{pmatrix}
  -\alpha_h  & 0        & \omega_h             & 0         & -\beta_h \\
  0          & -k       & 0                    & 0         & \beta_h \\
  0          & \gamma_h & -(\alpha_h+\omega_h) & 0         & 0 \\
  0          & -d       & 0                    & -\alpha_v & 0 \\
  0          &  d       & 0                    & 0         & -\alpha_v
\end{pmatrix},
\end{align*}
where $k=\gamma_h +\rho_h +\alpha_h$ and $d=(\beta_v\Lambda_v\alpha_h)/(\alpha_v\Lambda_h)$.

The Jacobian $J_*$ of the linearized system has a simple zero eigenvalue with all other eigenvalues having negative real part. For the case when $R_0=1$, using the technique in Castillo-Chavez and Song \cite{Song}, it can be shown that the matrix $J_*$ has a right eigenvector (corresponding to the zero eigenvalue), given by $w=[w_1\ w_2\ w_3\ w_4\ w_5]^T$, where
\begin{align*}
w_1=-\frac{\alpha_h^2+(\gamma_h+\omega_h +\rho_h)\alpha_h+\omega_h\rho_h}{\gamma_h\alpha_h}w_3,\quad w_2&=\frac{\omega_h+\alpha_h}{\gamma_h}w_3,\quad w_3=w_3>0,\\
w_4=-\frac{\alpha_h\beta_h\Lambda_v(\omega_h+\alpha_h)}{\gamma_h\alpha_v^2\Lambda_h}w_3,\quad w_5=&\frac{\alpha_h\beta_h\Lambda_v(\omega_h+\alpha_h)}{\gamma_h\alpha_v^2\Lambda_h}w_3.
\end{align*}
 Similarly, the components of the left eigenvector of $J_*$ (corresponding to the zero eigenvalue), denoted by $v=[v_1\ v_2\ v_3\ v_4\ v_5]$, are given by
\[
v_1=v_3=v_4=0,\quad v_2=v_2>0,\quad v_5=\frac{\Lambda_h\alpha_v(\gamma_h +\rho_h +\alpha_h)}{\beta_v\Lambda_v\alpha_h}v_2.
\]

\subsubsection*{Computation of $a$}
By computing the second-order partial derivatives at the disease free equilibrium point we have
\begin{align*}
\frac{\partial^2 f_2}{\partial x_1\partial x_j} & =0,\quad\mbox{for } j=1,2,3,4,5\\
\frac{\partial^2 f_2}{\partial x_2\partial x_j} & =0,\quad\mbox{for } j=1,2,3,4\\
\frac{\partial^2 f_2}{\partial x_3\partial x_j} & =0,\quad\mbox{for } j=1,2,3,4\\
\frac{\partial^2 f_2}{\partial x_4\partial x_j} & =0,\quad\mbox{for } j=1,2,3,4,5\\
\frac{\partial^2 f_2}{\partial x_5\partial x_j} & =0,\quad\mbox{for } j=1,4,5
\end{align*}
where as
\begin{align*}
\frac{\partial^2 f_2}{\partial x_2\partial x_5}=\frac{\partial^2 f_2}{\partial x_5\partial x_2}& =-\frac{\beta_h\alpha_h}{\Lambda_h},\\
\frac{\partial^2 f_2}{\partial x_3\partial x_5}=\frac{\partial^2 f_2}{\partial x_5\partial x_3}& =-\frac{\beta_h\alpha_h}{\Lambda_h}.
\end{align*}

 Similarly,
\begin{align*}
\frac{\partial^2 f_5}{\partial x_1\partial x_j} & =0,\quad\mbox{for } j=1,3,4,5\\
\frac{\partial^2 f_5}{\partial x_2\partial x_j} & =0,\quad\mbox{for } j=5\\
\frac{\partial^2 f_5}{\partial x_3\partial x_j} & =0,\quad\mbox{for } j=1,3,4,5\\
\frac{\partial^2 f_5}{\partial x_4\partial x_j} & =0,\quad\mbox{for } j=1,3,4,5\\
\frac{\partial^2 f_5}{\partial x_5\partial x_j} & =0,\quad\mbox{for } j=1,2,3,4,5
\end{align*}
while
\begin{align*}
\frac{\partial^2 f_5}{\partial x_1\partial x_2}&=\frac{\partial^2 f_5}{\partial x_2\partial x_1}
 =-\frac{\beta_v\Lambda_v\alpha_h^2}{\alpha_v\Lambda_h^2}=\frac{\partial^2 f_5}{\partial x_2\partial x_3}=\frac{\partial^2 f_5}{\partial x_3\partial x_2},\\
\frac{\partial^2 f_5}{\partial x_2^2}&=-\frac{2\beta_v\Lambda_v\alpha_h^2}{\alpha_v\Lambda_h^2},\
\frac{\partial^2 f_5}{\partial x_2\partial x_4}=\frac{\partial^2 f_5}{\partial x_4\partial x_2} =\frac{\beta_v\alpha_h}{\Lambda_h}.
\end{align*}
Then
\begin{align*}
a=&v_2\sum_{i,j=1}^{5}w_iw_j\frac{\partial^2f_2}{\partial x_i \partial x_j}(0,0)
+v_5\sum_{i,j=1}^{5}w_iw_j\frac{\partial^2f_5}{\partial x_i \partial x_j}(0,0)\\
=&2v_2\left(-w_2w_5\frac{\beta_h\alpha_h}{\Lambda_h}-w_3w_5\frac{\beta_h\alpha_h}{\Lambda_h}\right)\\
&+2v_5\left(-w_1w_2\frac{\beta_v\Lambda_v\alpha_h^2}{\alpha_v\Lambda_h^2}-w_2^2\frac{2\beta_v\Lambda_v\alpha_h^2}{\alpha_v\Lambda_h^2}
-w_2w_3\frac{\beta_v\Lambda_v\alpha_h^2}{\alpha_v\Lambda_h^2}+w_2w_4\frac{\beta_v\alpha_h}{\Lambda_h}\right)\\
=& \frac{2(\gamma_h+\rho_h+\alpha_h)(\alpha_h^2+(\gamma_h+\omega_h+\rho_h)\alpha_h+\omega_h\rho_h)(\omega_h+\alpha_h)v_2w_3^2}{
\Lambda_h\gamma_h^2}\\
&-\frac{4\alpha_h(\gamma_h+\rho_h+\alpha_h)(\omega_h+\alpha_h)^2v_2w_3^2}{\Lambda_h\gamma_h^2}-\frac{2\alpha_h(\omega_h+\alpha_h)
(\gamma_h+\rho_h+\alpha_h)v_2w_3^2}{\Lambda_h\gamma_h}\\
&-\frac{2\alpha_h\alpha_v\beta_h(\gamma_h+\rho_h+\alpha_h)(\omega_h+\alpha_h)^2v_2w_3^2}{\gamma_h^2}\\
=&\frac{2(\gamma_h+\rho_h+\alpha_h)(\omega_h+\alpha_h)}{\gamma_h}\biggl[\frac{\alpha_h^2+(\gamma_h+\omega_h+\rho_h)
\alpha_h+\omega_h\rho_h}{\Lambda_h\gamma_h} -\frac{2\alpha_h(\omega_h+\alpha_h)}{\Lambda_h\gamma_h}-\frac{\alpha_h}{\Lambda_h}\\
&-\frac{\alpha_h\alpha_v\beta_h(\omega_h+\alpha_h)}{\gamma_h}\biggl]v_2w_3^2\\
=& \frac{2(\alpha_v(\gamma_h+\rho_h+\alpha_h)(\omega_h+\alpha_h))^2}{\gamma_h^2\beta_h\beta_v\alpha_h\Lambda_v} \biggl[ \frac{\beta_v\Lambda_v\rho_h}{\Lambda_h^2\alpha_v^3(\gamma_h+\rho_h+\alpha_h)}-R_0^2\biggl]v_2w_3^2.
\end{align*}
\subsubsection*{Computation of $b$}
To compute $b$ we need to find the second order derivatives of $f_2$ and $f_5$ with respect to $x_i$ and $\beta_h$ at the disease free equilibrium point. Direct computation shows
\begin{align*}
\frac{\partial^2 f_2}{\partial x_i\partial \beta_h}&=0,\quad\mbox{for }i=1,2,3,4\\
\frac{\partial^2 f_5}{\partial x_i\partial \beta_h}&=0,\quad\mbox{for }i=1,2,3,4,5,
\end{align*}
and
\[
\frac{\partial^2 f_2}{\partial x_5\partial \beta_h}=\beta_h.
\]

\[
b=\sum_{i,k=1}^{5}v_kw_i\frac{\partial^2f_k}{\partial x_i \partial \beta_h}(0,0)=\beta_h v_2 w_5>0.
\]
Since $b$ is positive, it follows that the sign of $a$ determines the local dynamics around the disease free equilibrium for $\beta_h=\beta_h^*$. Based on Theorem \ref{thm:chavez}, system (\ref{model1}) will undergo backward bifurcation. Hence the following result holds.
\begin{theorem}
The malaria model (\ref{model1}) exhibits backward bifurcation at $R_0=1$ whenever $a$ is positive.
\end{theorem}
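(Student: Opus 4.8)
The plan is to apply the Castillo-Chavez--Song criterion, Theorem~\ref{thm:chavez}, directly, taking the human contact rate $\beta_h$ as the bifurcation parameter. First I would fix the threshold value $\beta_h^\ast$ determined by $R_0=1$ and set $\phi=\beta_h-\beta_h^\ast$, so that $\phi=0$ corresponds exactly to $R_0=1$; after translating the disease-free equilibrium $E_0$ to the origin, the system written in the form (\ref{bifurmodel}) is precisely of the type (\ref{bifur1}), and $f\in\mathbb{C}^2$ since the only nonlinearities are the incidence terms, which are smooth on the region $N_h>0$. Thus the abstract hypotheses of the theorem can be checked one at a time.

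For hypothesis (A1), I would analyse the spectrum of $J_\ast$, the Jacobian at $E_0$ with $\beta_h=\beta_h^\ast$. Its near-triangular structure lets three eigenvalues be read off immediately as $-\alpha_h$, $-\alpha_v$ and $-(\alpha_h+\omega_h)$, all negative, while the remaining two are the eigenvalues of the $2\times2$ block coupling the infected variables $x_2$ and $x_5$. At $R_0=1$ the determinant of that block vanishes, so those eigenvalues are $0$ and $-(k+\alpha_v)$ with $k=\gamma_h+\rho_h+\alpha_h$; this gives a simple zero eigenvalue with all others having negative real part, as (A1) demands. For (A2), I would solve $J_\ast w=0$ and $vJ_\ast=0$. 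The left null vector is supported on the infected coordinates --- $v_1=v_3=v_4=0$, $v_2,v_5>0$ --- which follows immediately because the columns of $J_\ast$ indexed by the non-infected compartments carry nonzero diagonal entries; the right null vector is then determined up to the free positive scale $w_3$.

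With the eigenvectors in hand, the substantive work is the evaluation of $a$ and $b$. Here I would exploit the fact that $f$ is quadratic only through the two incidence terms, so almost every second partial derivative $\partial^2 f_k/\partial x_i\partial x_j$ vanishes at $E_0$; only those of $f_2$ involving $x_5$ and those of $f_5$ involving $x_2$ and $x_4$ survive, and only the weights $v_2$ and $v_5$ are nonzero. Assembling $a=\sum v_k w_i w_j\,\partial^2 f_k/\partial x_i\partial x_j$ and simplifying produces the closed form displayed above, whose scalar prefactor $2(\alpha_v k(\omega_h+\alpha_h))^2/(\gamma_h^2\beta_h\beta_v\alpha_h\Lambda_v)\,v_2w_3^2$ is manifestly positive. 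The mixed derivative with respect to $\phi$ is far simpler and yields $b=\beta_h v_2 w_5$, strictly positive since $\beta_h,v_2,w_5>0$. I expect this bookkeeping --- correctly matching each surviving second derivative with its eigenvector weights and collecting the many terms in $a$ --- to be the main obstacle, because a single sign slip would invert the predicted bifurcation direction.

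Finally, since $b>0$ unconditionally, Theorem~\ref{thm:chavez} reduces the whole classification to the sign of $a$: whenever $a>0$ the hypotheses $a>0$ and $b>0$ both hold, so a backward bifurcation occurs at $\phi=0$, that is, at $R_0=1$. This is exactly the asserted conclusion. One delicate point I would flag is that Theorem~\ref{thm:chavez} is stated with a non-negative right eigenvector, whereas here $w_1,w_4<0$; I would note that these negative components belong to non-infected compartments and do not affect the center-manifold computation underlying the criterion, so the conclusion stands.
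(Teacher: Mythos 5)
Your proposal follows essentially the same route as the paper: choose $\beta_h$ as the bifurcation parameter, verify (A1)--(A2) for the Jacobian $J_*$ at the disease-free equilibrium, compute the left and right null eigenvectors, evaluate $a$ and $b$, observe that $b=\beta_h v_2 w_5>0$ unconditionally, and invoke Theorem~\ref{thm:chavez} to conclude backward bifurcation whenever $a>0$. The only addition is your remark on the negative components $w_1,w_4$ of the right eigenvector, a point the paper passes over silently and which is indeed handled by the standard caveat that negative entries in non-infected compartments do not invalidate the criterion.
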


\begin{figure}[htb!]
\centering
\includegraphics[width=.6\textwidth]{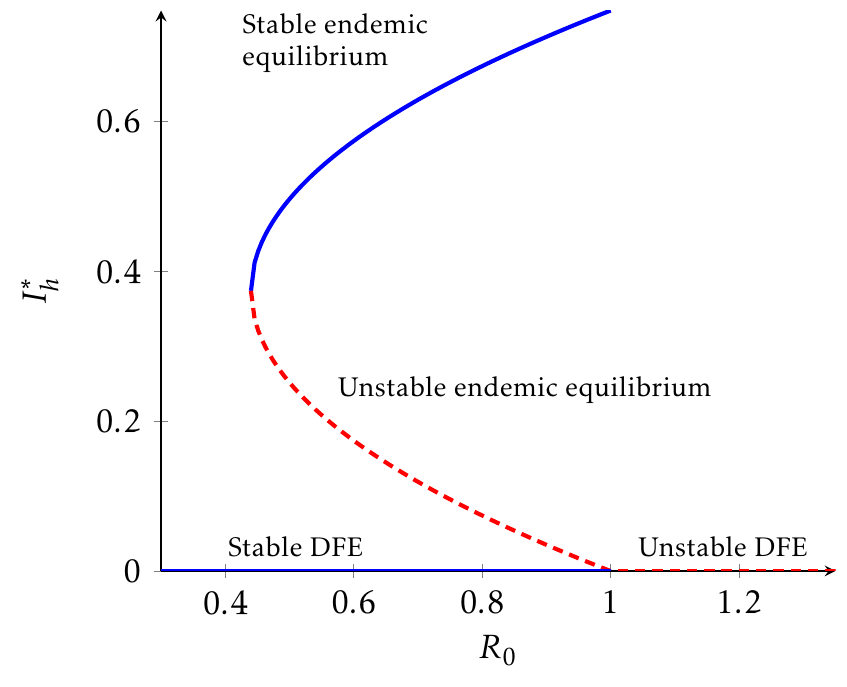}
\caption{Backward bifurcation phenomenon}\label{backbifur}
\end{figure}

The following parameter values were used to simulate the bifurcation diagram
$\Lambda_h=7, \Lambda_v= 0.071, \alpha_h=0.23, \alpha_v=0.54, \beta_v=0.82, \gamma_h= 0.143, \rho_h=0.27, \omega_h= 0.011$. It is important to note that these parameter values were used for illustrative purpose only, and may not be realistic epidemiologically.

Notice that $a>0\Leftrightarrow R_0<\sqrt{M}$ where $M=\frac{\beta_v\Lambda_v\rho_h}{\Lambda_h^2\alpha_v^3(\gamma_h+\rho_h+\alpha_h)}$. Moreover, $M=1$ gives
\[\rho_h=\frac{\Lambda_h^2\alpha_v^3(\gamma_h+\alpha_h)}{\beta_v\Lambda_v-\Lambda_h^2\alpha_v^3}:=\rho_h^*.\]
Hence, if the disease induced death rate satisfies $0\leq \rho_h\leq \rho_h^*$, then the disease can be eradicated provided that $R_0<1$.

\section{Numerical Simulations and Result}

In this section we present a numerical simulations of the model which is carried out using a fourth order Rung-Kutta scheme in
Matlab ode45. The values of the parameter used in the model are given in Table \ref{tab:par-value}.
\begin{table}[htb!]
\centering
\caption{Parameter values}\label{tab:par-value}
\begin{tabular}{p{5cm} p{5cm} r}
\hline
Parameter    &  Values  &  Reference           \\  \hline
$\Lambda_h$  & 2.5       & \cite{Lashari2011}  \\
$\beta_h$    & 0.01     &  \cite{Yang}         \\
$\alpha_h$   & 0.05     & \cite{Macdonald57}   \\
$\rho_h$     & 0.0001   &  Assumed             \\
$\gamma_h$   & 0.9      &  \cite{Gebre}        \\
$\Lambda_v$  & 500      & \cite{Lashari2011}   \\
$\beta_v$    & 0.005    & \cite{Yang}          \\
$\alpha_v$   & 0.06     &  \cite{Macdonald57}  \\
$\omega_h$   & 0.9      &  Assumed             \\ \hline
\end{tabular}
\end{table}

The initial conditions $S_h(0)=19413000 ,\ I_h(0)=3797000,\ R_h(0)=3790000,$ $ S_v(0)=16800000,\ I_v(0)=38200000$ were used for the simulations. In Figure \ref{humanplot}, the fractions of the populations, $S_h,I_h$ and $R_h$ are plotted versus time. The susceptible populations will initially decreases with time and then increases and the fractions of infected human populations decrease. The reproduction number is below one and the disease-free equilibrium point $E_0=(\Lambda_h/\alpha_h ,0,0)$ is stable. The susceptible and infected mosquito population decreases over time as shown in Figure \ref{moquitoplot}.

\begin{figure}[htb!]
\centering
\includegraphics[width=0.8\textwidth]{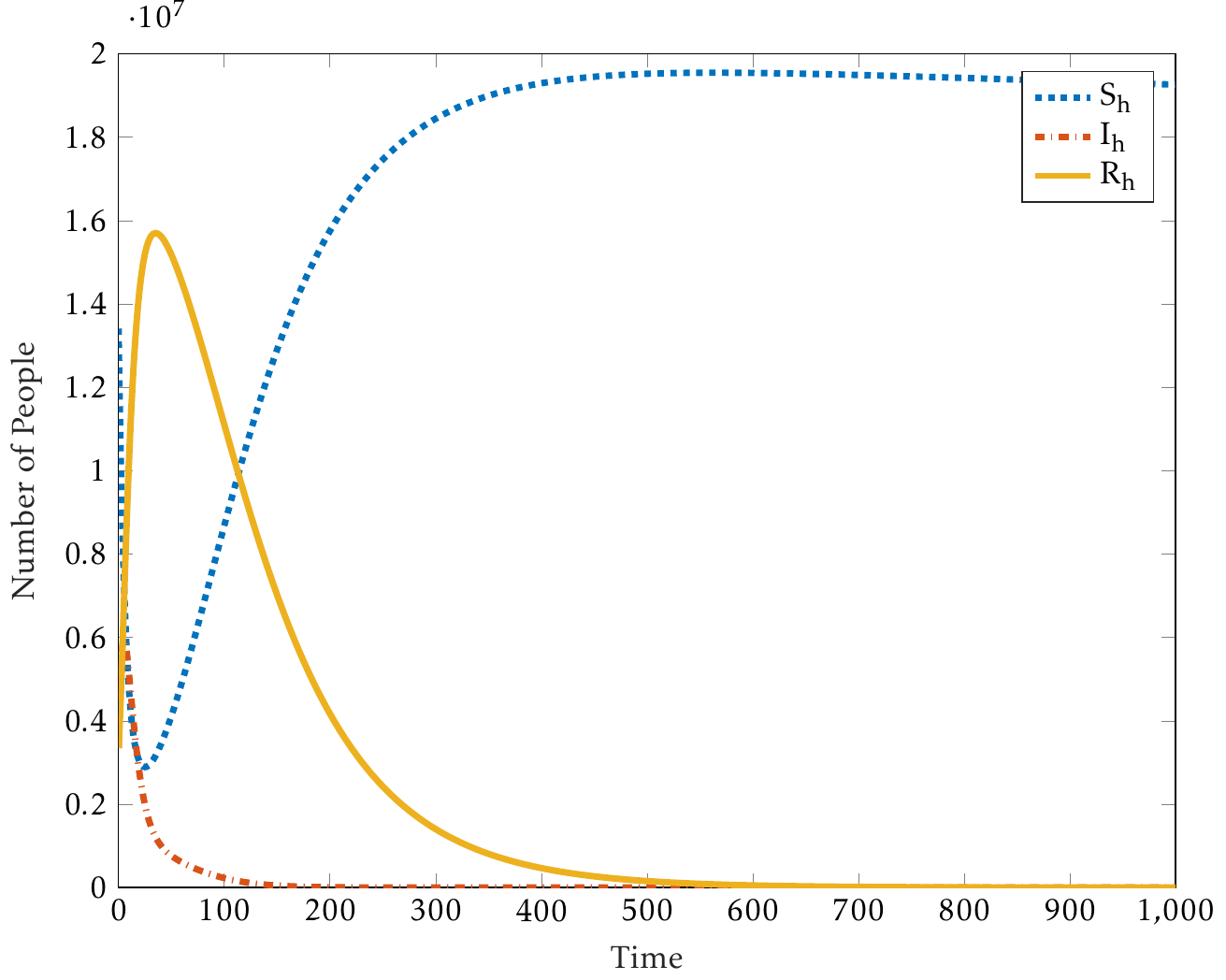}
\caption{Simulations of the model (\ref{model1}) for susceptible, infected and recovered human population which shows the local stability of the disease free equilibrium.}\label{humanplot}
\end{figure}

\begin{figure}[htb!]
\centering
\includegraphics[width=0.7\textwidth]{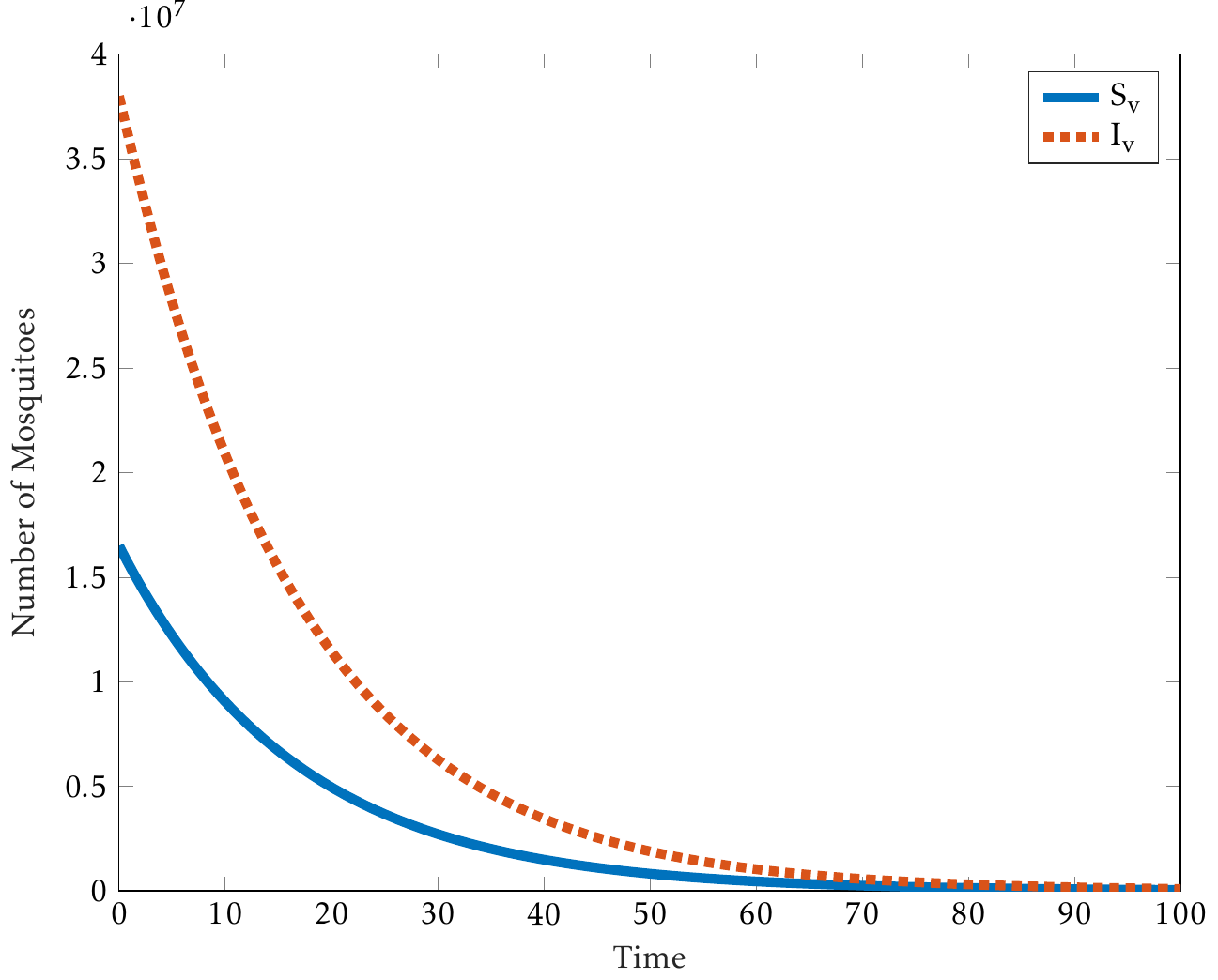}
\caption{Simulations of the model (\ref{model1}) for susceptible and infected mosquito population which shows the local stability of the disease free equilibrium.}\label{moquitoplot}
\end{figure}

\begin{figure}[htb!]
\centering
\includegraphics[width=0.7\textwidth]{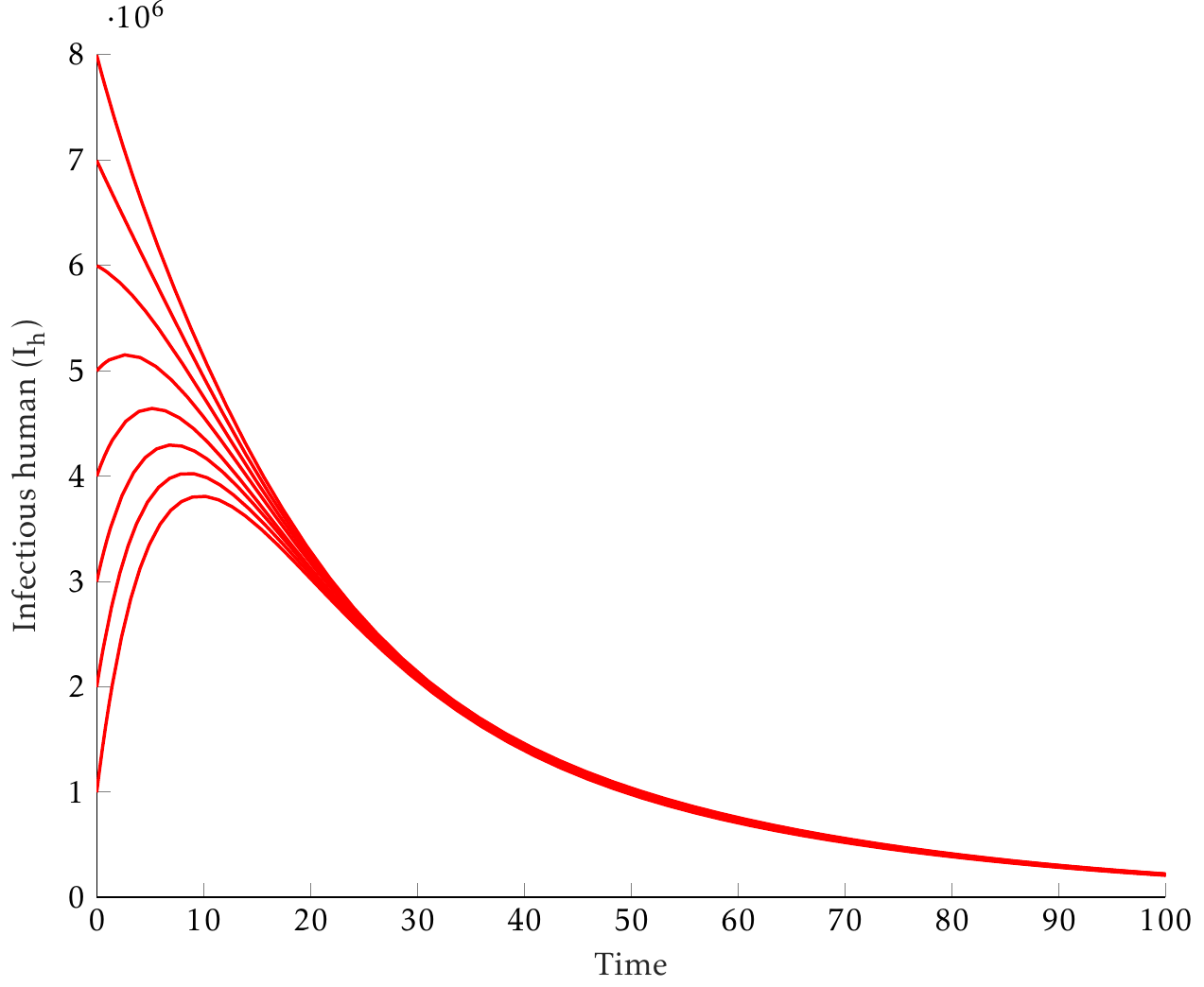}
\caption{Time series plot of the model (\ref{model1}) with different initial conditions which shows the global stability of the disease free equilibrium.} \label{humanseriesplot}
\end{figure}

\begin{figure}[htb!]
\centering
\includegraphics[width=0.8\textwidth]{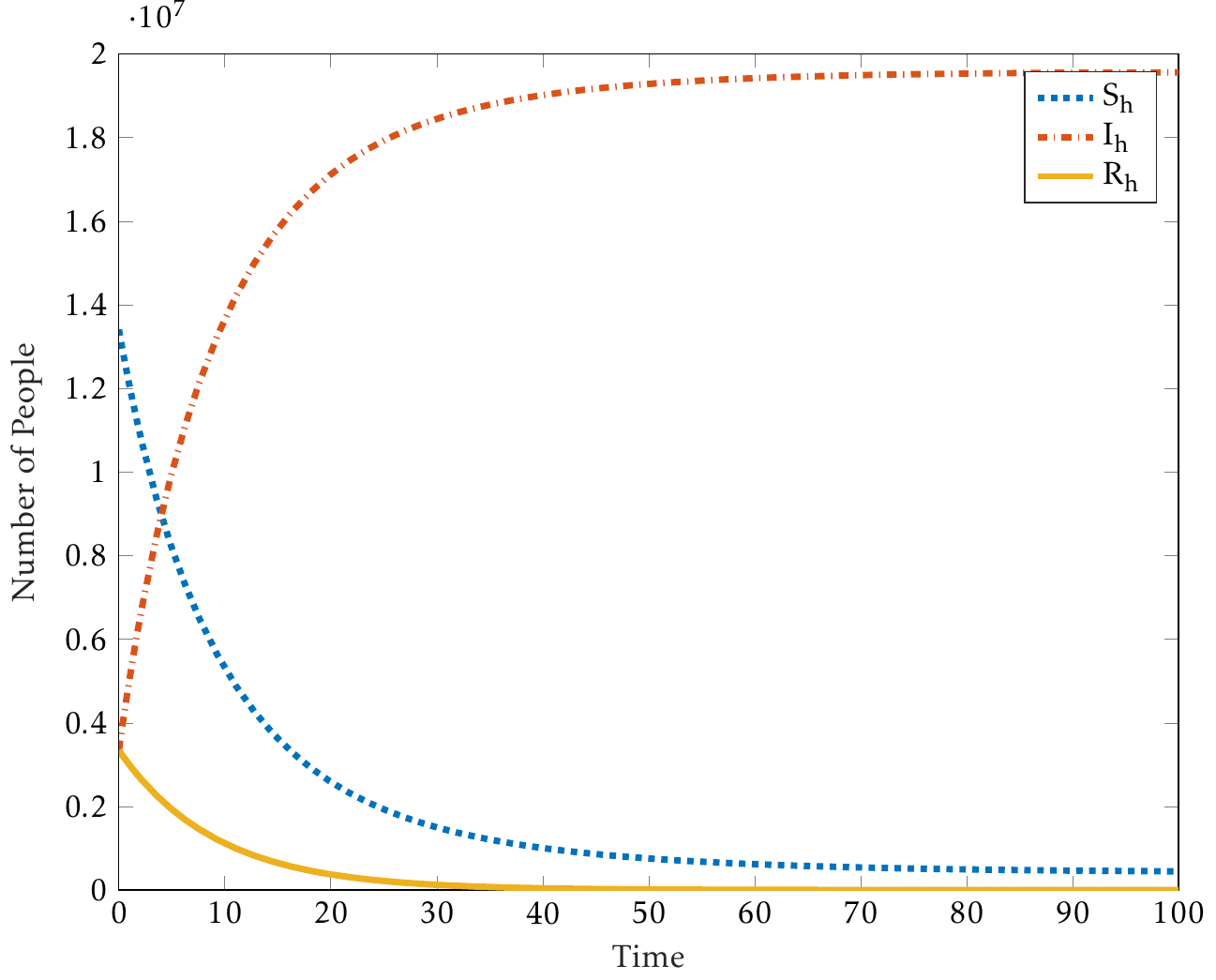}
\caption[Simulations of the model (\ref{model1}) for $R_0=47.6631$]{Simulations of the model (\ref{model1}) for $R_0=47.6631$ with parameter values $\Lambda_h=0.000091,\Lambda_v=0.071,\beta_h=0.0714285,\beta_v=0.09091,\gamma_h=0.000014285,\alpha_h=0.00004278,\alpha_v=0.04,
\rho_h=0.0000027,\omega_h=0.109$} \label{endemicsatbleplot}
\end{figure}

\begin{figure}[htb!]
\centering
\includegraphics[width=0.8\textwidth]{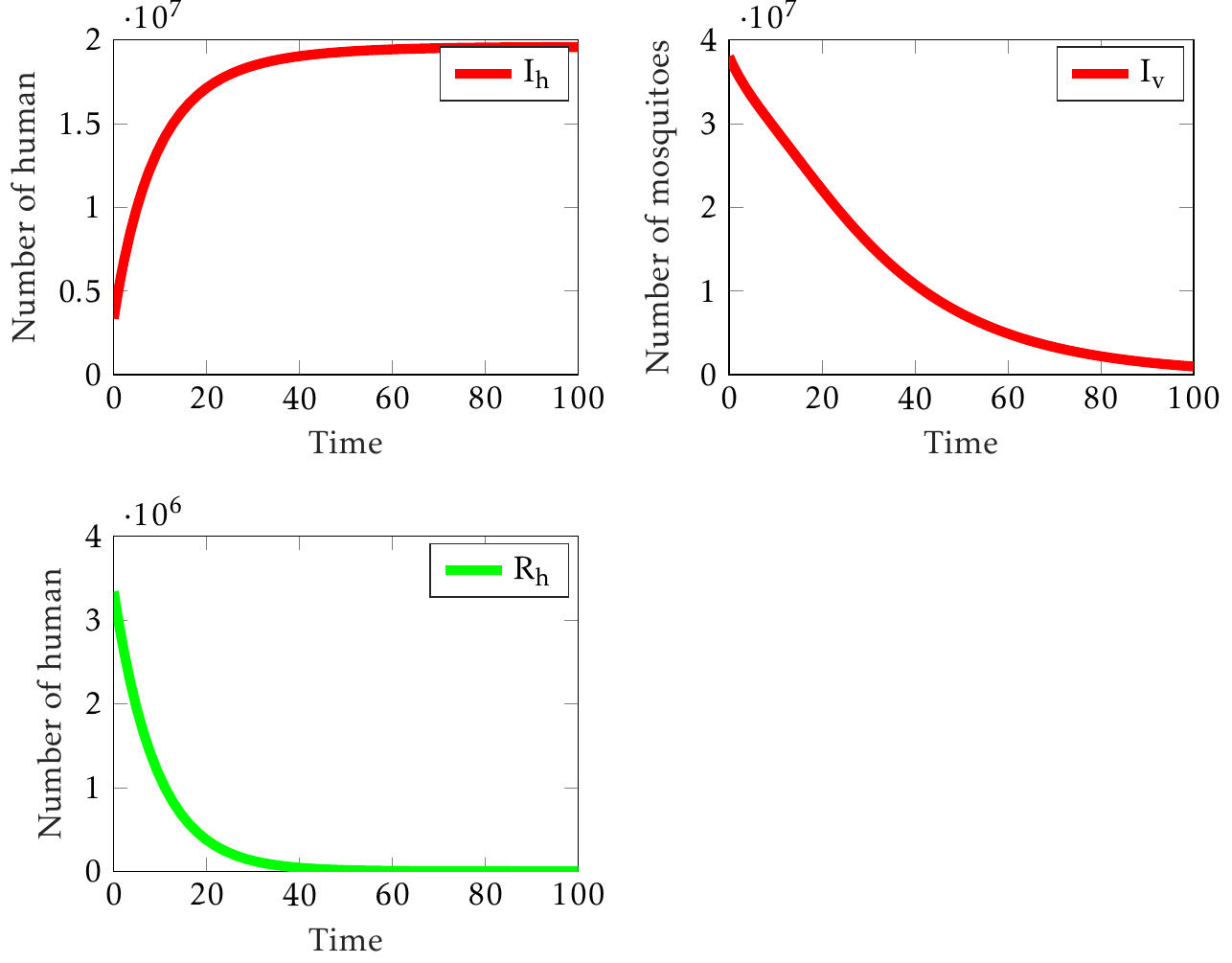}
\caption{Simulations of the model (\ref{model1}) for $R_0=47.6631$ showing infected human, infected mosquito and recovered human.  } \label{endemicsubplot}
\end{figure}

\section{Discussion}

In this paper, a deterministic mathematical model for malaria transmission has been presented. It was showed that there exists a domain in which the model is mathematically and epidemiologically well posed. The next generation matrix was used to derive the basic reproduction number $R_0$, which is the average number of new cases that one infected case will generate. The disease free equilibrium of the model was proved to be locally asymptotically stable whenever $R_0$ is less than unity. It is also showed that the disease free equilibrium is globally asymptotically stable provided that the basic reproduction number is less than some threshold. The unique endemic equilibrium point was shown to exist under certain conditions. The possibility of multiple endemic equilibrium point was discussed. It was shown that the model undergo backward bifurcation phenomenon. The stable disease free equilibrium coexist with the stable endemic equilibrium. Bringing the disease (malaria) induced death rate below some threshold was shown to be sufficient to eliminate backward bifurcation. Thus along with treated bed nets, and insecticides that would reduce the mosquito population there is a need for effective drug and efficient treatment which reduce the number of malaria induced death rate.

\newpage 

\bibliography{mybibfile}

\end{document}